\newtheorem{Theo}{Theorem}
\newtheorem{lemma}{Lemma}
\newtheorem{rmk}{\sf Remark}
\newtheorem{definition}{Definition}
\newtheorem{coro}{Corollary}
\DeclareMathAlphabet{\mathpzc}{OT1}{pzc}{m}{it}
\def\b0{{\mathbf 0}}
\begin{document}

\author{\authorblockN{Pin-Hsun Lin, Carsten Rudolf Janda, and Eduard Axel Jorswieck\\
Communications Laboratory, \\
Department of Electrical Engineering and Information Technology, \\Technische Universit\"{a}t Dresden, Germany\\
Email:{\{pin-hsun.lin, carsten.janda, eduard.jorswieck\}@tu-dresden.de}.
}
\thanks{ This work was supported in part by Fast Cloud and Fast Secure.}
}

\title{Stealthy Secret Key Generation}
\maketitle \thispagestyle{empty}
\begin{abstract}
In this work we consider a \textit{complete} covert communication system, which includes the source-model of a stealthy secret key generation (SSKG) as the first phase. The generated key will be used for the covert communication in the second phase of the current round and also in the first phase of the next round. We investigate the stealthy SK rate performance of the first phase. The derived results show that the SK capacity lower and upper bounds of the source-model SKG are not affected by the additional stealth constraint. This result implies that we can attain the SSKG capacity for free when the sequences observed by the three terminals Alice ($X^n$), Bob ($Y^n$) and Willie ($Z^n$) follow a Markov chain relationship, i.e., $X^n-Y^n-Z^n$. We then prove that the sufficient condition to attain both, the SK capacity as well as the SSK capacity, can be relaxed from physical to stochastic degradedness. In order to underline the practical relevance, we also derive a sufficient condition to attain the degradedness by the usual stochastic order for Maurer's fast fading Gaussian (satellite) model for the source of common randomness.
\end{abstract}

\section{Introduction}

To realize a secure physical layer, concealing the action of transmitting signals from a warden Willie can be seen as a first step to protect information. If the action of the transmission is detected by Willie, the secrecy/confidentiality (or the \textit{hidability} \cite{Che_deniable_com}) provided by the wiretap coding \cite{Wyner_wiretap} can be interpreted as a further protection. There are two main notions to conceal the transmission of signals (or to attain the \textit{deniability} \cite{Che_deniable_com}): 1) \textit{stealthy} communications \cite{Hou_stealth}, \cite{Che_deniable_com}, and 2) \textit{covert} communications/low probability of detection \cite{Bash_LPD13}, \cite{Che_deniable_com, Wang_LPD, Bloch_covert_com16}. Roughly speaking, both notions conceal the desired signal in an ambient signal, such that Willie is not able to distinguish from his observed distributions whether a meaningful transmission is ongoing or not. In particular, in the first notion the meaningful and meaningless signals are transmitted separately in time. Because these two signals have close distributions at Willie, he cannot distinguish them. In contrast, in the second notion the transmitter is either ON or OFF and then the meaningful signal is superimposed on the meaningless one, i.e., the additive noise. The number of messages, which can be covertly transmitted, follows the \textit{square root law}\cite{Bash_LPD13, Che_deniable_com, Wang_LPD, Bloch_covert_com16} of the block length, i.e., the corresponding Shannon's rate is zero. In contrast, a positive capacity can be achievable for the stealth communications. Note that for both notions, if the main (Bob's) channel has no advantage over Willie's channel, additional keys are necessary to conceal the signals, e.g., \cite{Wang_LPD}, \cite{Bloch_covert_com16}. For a more detailed comparison please refer to \cite{Che_arxiv}.

In this paper we aim to design a \textit{complete} covert communication system by investigating the stealthy secret key generation (SSKG) scheme with rate-unlimited public discussions. In each round of such transmission, SSKG in the first phase can provide keys to enable the covert communications in the second phase, when Willie has a stronger channel than Bob \cite{Bloch_covert_com16}. Instead of using normal SKG, the SSKG can avoid Willie's awareness during the first phase. Therefore, combining the first with the second phase, we can attain the complete covert communication. On the contrary, normal SKG schemes may utilize public communications for advantage
distillation, information reconciliation, and privacy amplification \cite{Bloch_book}, which may raise Willie's attention. Therefore, directly applying the normal SKG scheme for the keys violates the constraints of stealthy/covert communications. Note that when the divergence between the distributions of channel outputs for meaningful and meaningless signals at Willie is higher than that at Bob, secret keys shared between Alice and Bob are necessary \cite[Theorem 2]{Bloch_covert_com16} to achieve covert communications. These keys are used to switch between different codebooks to calm Willie.

The reason we consider stealth but not covert communications for the SKG is due to the latency. It is known that covert communications result in a zero rate when the block length approaches infinity (but the number of transmitted messages can be positive) \cite{Bloch_covert_com16}, \cite{Wang_LPD}. Even under finite block length, the extreme low rate due to the covert communications constraint will incur an extremely high latency if, for example, we exploit covert communications in the public discussion to avoid Willie's awareness. 
Further theoretical analysis on the minimum rate of public discussion necessary for maximum SK rate can be referred to \cite{Tyagi_min_pub_discussion_max_SKG}\footnote{Please note that this reference considers the model where Willie observes an independent source to Alice's and Bob's observation.}.
The main contributions of this work are summarized as follows:
\begin{itemize}
\item We consider the stealthy source-model SKG and the \textit{effective secrecy} \cite{Hou_stealth} for the source-model SKG. Based on the  effective secrecy constraint, we derive the capacity lower and upper bounds for SSKG, which correspond to those of SKG without the stealth constraint. It implies that the stealthy SK capacity is unchanged when the common randomness is degraded.
\item We then prove that the sufficient condition to attain the SSK capacity can be relaxed from physical to stochastic degradedness.
\item We also derive a sufficient condition to attain the degradedness by the usual stochastic order \cite{shaked_stochastic_order} for Maurer's fast fading Gaussian (satellite) model \cite{Naito_satellite} for the source of common randomness.
\end{itemize}

The rest of the paper is organized as follows. In Section
\ref{Sec_system_model} we introduce the preliminaries and the
considered system model. In Section \ref{Sec_main_results}
 we derive our main results. In Section \ref{Sec_discussion} we derive the sufficient condition to attain the degradedness. Finally Section
\ref{Sec_conclusion} concludes this paper.\\

\emph{Notation}: Upper case normal/bold letters denote random
variables/random vectors (or matrices), which will be defined
when they are first mentioned; lower case bold letters denote
vectors. And we denote the probability mass function (pmf) by $P$. The entropy of $X$ is defined as $H(X)$. The mutual information between two random variables $X$ and $Y$ is
denoted by $I(X;Y)$. The divergence between distributions $P_X$ and $P_Y$ is denoted by $D(P_X||P_Y)$. The complementary cumulative density function
(CCDF) is denoted by $\bar{F}_X(x)=1-F_X(x)$, where $F_X(x)$ is the
CDF of $X$. The notion
$\,X\sim \,F$ denotes that the random variable $X$ follows the
distribution $F$. The subscript $i$ in $X_i$ denotes the $i$-th symbol and $X^{i}\triangleq[X_1,\,X_2,\,\cdots,\,X_i]$. $X-Y-Z$ denotes the Markov chain. $o(.)$ is the little-o notation for computational complexity. Let $\lceil.\rceil$ denote the ceil operator. All logarithms are with base 2. The stochastic independence between $X$ and $Y$ is denoted by $X\perp\!\!\!\!\perp Y$.

\section{Preliminaries and system model}\label{Sec_system_model}

 From \cite[Theorem 2]{Bloch_covert_com16} we know that given two channels to Bob and Willie, respectively, if Willie can distinguish meaningful and meaningless signals better than Bob, the following amount of secret key bits are necessary to enable covert communications:
\begin{align}\label{EQ_covert_com_R_key}
\log K = \omega_n\sqrt{n}[(1+\xi)D(P_Z||Q_Z)-(1-\xi)D(P_Y||Q_Y)]^+,
\end{align}
where $P_Y$and $Q_Y$ are the output distributions at Bob when Alice is ON and OFF, respectively. Same definitions for $P_Z$ and $Q_Z$ at Willie; $\omega_n\sqrt{n}$ is the number of the transmitted non-innocent symbols within the codeword length $n$ with the constraint $\omega_n=o(1)\cap\omega(1/\sqrt{n})$ when $n\rightarrow\infty$. On the contrary, if Bob can distinguish meaningful or meaningless signals better than Willie in terms of the divergence of these two distributions, no additional keys are needed for covert communications. However, for the former case to generate the key, normal SKG schemes may attract Willie's attention. Therefore, to attain \textit{complete} covert communications, we investigate the SKG with the stealth constraint.
In the considered system, there are two phases for each round of the complete covert communication as shown in Fig. \ref{Fig_packet}. In the first phase, Alice and Bob will use the SSKG to generate keys. In the second phase, part of the generated keys are used for the covert communications to fulfill \eqref{EQ_covert_com_R_key}. Note that the remaining keys will be used for the next round of SSKG, which will be explained later\footnote{A one-time initial key should be shared between Alice and Bob before the whole operation, which is out of the scope of this work.}. In the following we focus on the development of the first phase, where the $n$-time source observations at Alice, Bob, and Willie are denoted by $X^n$, $Y^n$, and $Z^n$, respectively, following the distribution $P_{X^nY^nZ^n}=\prod_{i=1}^nP_{XYZ}$ with alphabets $\mathcal{X},\,\mathcal{Y},\,\mathcal{Z}$, respectively. Denote the public discussion between Alice and Bob by a vector $\bm F\in\mathcal{X}^r$ through a noiseless channel, from which Willie can perfectly observe $\bm F$.

\begin{figure}[h]
\centering \epsfig{file=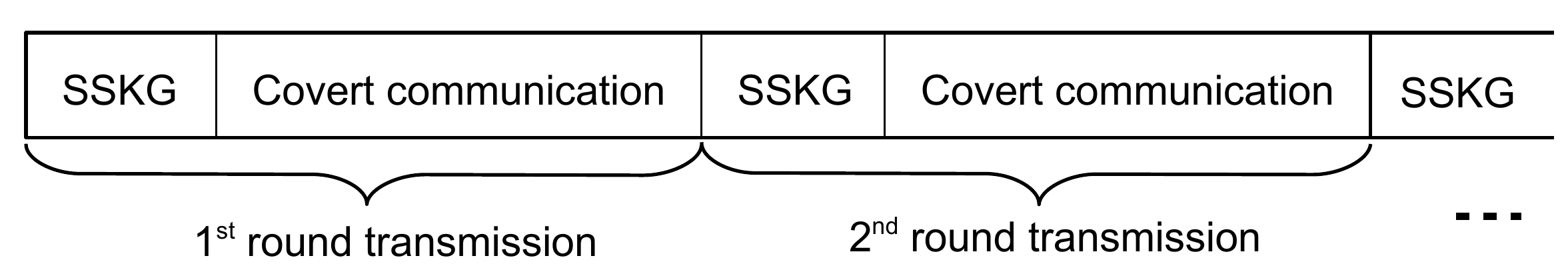, width=0.8\textwidth}
\caption{The proposed transmission scheme for a complete covert communication when Bob has a worse channel than Willie in the sense that $(1-\xi)D(P_Y||Q_Y)<(1+\xi)D(P_Z||Q_Z)$.}
\label{Fig_packet}
\end{figure}

The main step for deriving the SSK capacity lower bound in this paper hinges on constructing a conceptual wiretap channel (CWTC) as in \cite{Maurer_SKG93}, where we set $r=n$. Note that the selection of $r=n$ is due to the construction of the CWTC. In \cite{Ahlswede_SKG93}, the authors proposed a different scheme with $r=1$ to achieve the same SK capacity lower bound. We construct an equivalent wiretap codebook $\{U^n(m,w)\}$, where $m=1,\cdots ,L$ and $w=1,\cdots ,L_1$, $L\triangleq 2^{nR}$ and $L_1\triangleq  2^{nR_1}$ are the numbers of secure and confusion messages, respectively; $m$ and $w$ are uniformly selected, respectively; $U^n(m,w)\in \mathcal{X}^n,\,\forall (m,w)$. In addition, $(Z^n,\bm F,U^n)\sim P_{Z^n\bm FU^n}=\prod_{i=1}^n P_{ZFU}=\prod_{i=1}^n P_{ZF|U}P_U$, where we consider the equivalent channel from Alice to Willie as:
\begin{align}
P_{Z^n\bm F|U^n}&=\prod_{i=1}^n P_{ZF|U},
\end{align}
i.e., the equivalent channel output at Willie is $(Z^n,\,\bm F)$. Similarly, the equivalent channel output at Bob is $(Y^n,\,\bm F)$. Let $U^n$ be independent to $\{X^n,\,Y^n,\,Z^n\}$.
To consider the behavior of stealth, the distributions of the meaningful and meaningless signals at the equivalent channel output at Willie are respectively expressed as:
\begin{align}
P_{Z^n\bm F}&=\frac{1}{LL_1}\sum_{(m,\,w)=(1,1)}^{(L,\,L_1)}P_{Z^n,\bm F|U^n}(z^n,f^n|u^n(m,w)),\label{EQ_code_distr}\\
Q_{Z^n\bm F}&=\sum_{u^n}P_{Z^n,\bm F|U^n}(z^n,f^n|u^n)P_{U^n}(u^n).\label{EQ_target_distr}
\end{align}
In this work, we consider the following constraints:
\begin{align}
 \mathds{P}_e\triangleq\mbox{Pr}(K\neq\hat{K})&\leq \epsilon,\label{EQ_Pe}\\
R-H(K)&\leq \epsilon,\label{EQ_uniformity}\\
D(P_{KZ^n\bm {F}}||P_K P_{Z^n\bm {F}})+D(P_{Z^n\bm {F}}||Q_{Z^n\bm {F}})&\leq\epsilon, \label{EQ_sum_of_constraints}
\end{align}
where \eqref{EQ_Pe} is the average error probability constraint at Bob; \eqref{EQ_uniformity} is the uniformity constraint of the keys; \eqref{EQ_sum_of_constraints} is the effective secrecy constraint, where the first term denotes the \textit{non-confusion} \cite{Hou_stealth} in a strong secrecy manner. In addition, the second term of \eqref{EQ_sum_of_constraints} denotes the \textit{non-stealth}. We can further rearrange \eqref{EQ_sum_of_constraints} by the following:
\begin{align}
&D(P_{KZ^n\bm {F}}||P_K P_{Z^n\bm {F}})+D(P_{Z^n\bm {F}}||Q_{Z^n\bm {F}})\notag\\
=& \sum_{K,Z^n,\bm {F}} P_{KZ^n\bm {F}}\left(\log\frac{P_{KZ^n\bm {F}}}{P_KP_{Z^n\bm {F}}}+\log\frac{P_{Z^n\bm {F}}}{Q_{Z^n\bm {F}}}\right)\notag\\
=& \sum_{K,Z^n,\bm {F}} P_{KZ^n\bm {F}}\left(\log\frac{P_{KZ^n\bm {F}}}{P_KQ_{Z^n\bm {F}}}\right)\notag\\
\triangleq& D( P_{KZ^n\bm {F}}||P_KQ_{Z^n\bm {F}}).\label{EQ_effective_secrecy}
\end{align}
Borrowing the terminology from \cite{Hou_stealth}, we coin \eqref{EQ_effective_secrecy} as the \textit{effective secrecy} for SSKG. The main difference of this work to \cite{Hou_stealth} will be discussed later.

In fact, we can combine the uniformity constraint of the SKG \eqref{EQ_uniformity} with \eqref{EQ_effective_secrecy} as follows:
\begin{align}
D( P_{KZ^n\bm {F}}||P_KQ_{Z^n\bm {F}})+\log|\mathcal{K}|-H(K)
=&\sum P_{KZ^n\bm {F}}\left( \log\frac{P_{KZ^n\bm {F}}}{P_KQ_{Z^n\bm {F}}}\right)-\sum P_{KZ^n\bm {F}}\log\left(\frac{P_{\textsf{Unif}}}{P_K}\right)\notag\\
=&D( P_{KZ^n\bm {F}}||P_{\textsf{Unif}}Q_{Z^n\bm {F}}).
\end{align}

{Note that a common assumption for the public discussion channel is that all parties can access the same discussed signal. To operate under this assumption, additional keys shared between Alice and Bob are necessary for them to distinguish the meaningful signal when Willie is kept unaware, which is one of the main reason that we need the SSKG in the first phase. Otherwise, from channel resolvability it is clear that Bob cannot distinguish whether the received signal is meaningful or meaningless, either. We define the rate of the additional keys as $R_{k, st}$. In the following we derive a SSK rate generated by the SSKG, namely, $R_{SSK}$, which is sufficient to encompass both $R_{k, st}$ (for the first phase in the next round of transmission) and the key rate required by the covert communications (for the second phase in the current round).

\begin{lemma}\label{Lemma_R_SSK}
To achieve a complete covert communication system, the following SK rate $R_{SSK}$ is sufficient
\begin{align}\label{EQ_R_SSK}
R_{SSK}= 1+\frac{o(1)}{\sqrt{n}}\,\,\,bits/channel\,\,use.
\end{align}

\end{lemma}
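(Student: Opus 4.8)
The plan is to account for the key material across one round. The first phase runs for $n$ channel uses and produces $nR_{SSK}$ secret key bits, and these bits must simultaneously fund (i) the covert transmission of the second phase of the same round and (ii) the extra key budget $R_{k,st}$ that keeps the public discussion of the first phase stealthy in the next round. Writing $R_{\mathrm{cov}}$ for the covert key rate normalised to the phase-one block length, steady-state operation therefore requires
\[
R_{SSK}\;\ge\;R_{k,st}+R_{\mathrm{cov}},
\]
and it suffices to bound the two terms on the right.

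Bounding $R_{\mathrm{cov}}$ is the easy half. By \eqref{EQ_covert_com_R_key} the second phase consumes $\log K=\omega_n\sqrt{n}\,[(1+\xi)D(P_Z||Q_Z)-(1-\xi)D(P_Y||Q_Y)]^+$ key bits over a length-$n$ block; since, as is standard for covert communication, the ON output distributions are absolutely continuous with respect to the OFF ones, the relative entropies $D(P_Y||Q_Y)$ and $D(P_Z||Q_Z)$ are finite constants, while $\omega_n=o(1)\cap\omega(1/\sqrt{n})$. Hence $R_{\mathrm{cov}}=\tfrac{1}{n}\log K=\tfrac{\omega_n}{\sqrt{n}}[(1+\xi)D(P_Z||Q_Z)-(1-\xi)D(P_Y||Q_Y)]^+=\tfrac{o(1)}{\sqrt{n}}$.

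The substance is the bound $R_{k,st}\le 1$ bit/channel use. Since Willie observes $\bm F\in\mathcal{X}^n$ losslessly, stealth cannot come from hiding $\bm F$ itself; instead Alice generates $\bm F$ from a soft-covering (likelihood-encoder) codebook whose two indices are a key index, shared with Bob, and a payload index carrying the first-phase information, the codebook being built so that the induced $P_{Z^n\bm F}$ is $\epsilon$-close to the innocent distribution $Q_{Z^n\bm F}$ of \eqref{EQ_target_distr} in the effective-secrecy metric \eqref{EQ_effective_secrecy}. Bob, who knows the key index, restricts to the matching sub-codebook and decodes the payload from $(Y^n,\bm F)$; the sole role of the key is to let Bob separate a meaningful discussion from a meaningless one once the discussion has been rendered resolvable at Willie, and a careful accounting of the resolvability redundancy shows that at most one key bit per channel use is needed beyond what Bob can decode unaided. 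Combining this with the previous paragraph, the choice $R_{SSK}=1+\tfrac{o(1)}{\sqrt{n}}$ meets the balance and hence suffices.

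The main obstacle is this last accounting: showing that a shared key of rate one bit per channel use is simultaneously necessary — without it, channel resolvability forces Bob's view of a meaningful discussion to coincide with that of a meaningless one — and sufficient, which needs a one-shot soft-covering argument phrased directly in terms of the reformulated effective-secrecy divergence \eqref{EQ_effective_secrecy}. The remaining points are pure bookkeeping: taking both phases to use blocks of the same length $n$ so that the normalisations match, and verifying that the $o(1/\sqrt{n})$ slack absorbs both the resolvability overhead and the $\omega_n$ factor.
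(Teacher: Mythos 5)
Your decomposition $R_{SSK}\ge R_{k,st}+R_{\mathrm{cov}}$ and your treatment of the covert-phase term are exactly the paper's: the square-root law \eqref{EQ_covert_com_R_key} gives $\tfrac{1}{n}\log K=\tfrac{\omega_n}{\sqrt n}\,[\cdot]^{+}=\tfrac{o(1)}{\sqrt n}$, so that half is fine.

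The gap is in the bound $R_{k,st}\le 1$. You route it through a soft-covering/likelihood-encoder construction and then assert that ``a careful accounting of the resolvability redundancy shows that at most one key bit per channel use is needed beyond what Bob can decode unaided'' --- but you never perform that accounting, and you yourself label it ``the main obstacle.'' That is precisely the step carrying the content of the lemma, and nothing in a resolvability redundancy calculation would naturally output the constant $1$: such rate gaps depend on the source and channel statistics, not on a universal one-bit-per-use figure. The paper's argument is far more elementary and does not involve resolvability at all: what Bob must learn per discussion use is a single binary flag (``this discussion symbol is meaningful'' versus ``it is not''), so one pre-shared key bit per discussion symbol suffices to convey it without alerting Willie; since the CWTC construction of Maurer's scheme uses $r=n$ public-discussion symbols, this costs $n$ key bits over the $n$-use block, i.e.\ rate $1$. (The remark following the lemma makes explicit that this per-symbol flagging is deliberately the most wasteful granularity, so the lemma is an upper bound on key consumption.) To repair your proof, replace the unexecuted resolvability accounting with this direct counting argument; as written, the constant $1$ in \eqref{EQ_R_SSK} is assumed rather than derived.
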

\begin{proof}
For each discussion it takes at most 1 bit to indicate that it is meaningful or not. Therefore, we set $R_{k, st}=1$ due to the specific use of public channel \cite{Maurer_SKG93}, which results in $r=n$. In addition, the square root law of the key rate required by covert communications \cite[Theorem 2]{Bloch_covert_com16} results in the key rate upper bound $o(1)/\sqrt{n}$.
\end{proof}

\begin{rmk}
The generated SSK bits can be used in different ways for the first phase of each round of transmission. For example, the SSK bit can be used to indicate either each transmitted symbol, or $n$-symbol (i.e., the length of the SSKG), are meaningful or not. Lemma \ref{Lemma_R_SSK} corresponds to the former case, which consumes the largest number of keys. Therefore, Lemma \ref{Lemma_R_SSK} provides us an upper bound of SSK rate for a complete covert communication. On the other hand,
\end{rmk}

\section{Main Result And Proof}\label{Sec_main_results}
\begin{Theo}\label{TH_UB_LB}
The lower and upper bounds of the stealthy secret key capacity $C_{SK}^{Eff}$ of the source-model stealthy secret key generation given a discrete memoryless source $(\mathcal{X},\mathcal{Y},\mathcal{Z},P_{X,Y,Z})$ are
\begin{align}
&\max\{I(X;Y)-I(X,Z),\,I(Y;X)-I(Y;Z)\}\leq C_{SK}^{Eff}\leq \min\{I(X;Y),\,I(X,Y|Z)\}.\label{EQ_C_LB_UB}
\end{align}
\end{Theo}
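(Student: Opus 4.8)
The plan is to establish the two bounds in \eqref{EQ_C_LB_UB} separately. The converse (the right-hand inequality) is essentially inherited from the classical source-model secret-key problem, whereas the novelty lies in the direct part (the left-hand inequality), where one must check that the conceptual wiretap channel (CWTC) construction of Section~\ref{Sec_system_model} already meets the additional stealth requirement \emph{for free}, through a channel-resolvability (soft-covering) argument. For the upper bound I would first observe that the effective secrecy constraint \eqref{EQ_sum_of_constraints} is strictly stronger than the ordinary strong-secrecy constraint $D(P_{KZ^n\bm{F}}\|P_KP_{Z^n\bm{F}})\le\epsilon$, since the non-stealth term $D(P_{Z^n\bm{F}}\|Q_{Z^n\bm{F}})$ is a non-negative divergence. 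Hence every admissible SSKG scheme is, in particular, an admissible strong-secrecy SKG scheme with rate-unlimited public discussion, so $C_{SK}^{Eff}\le C_{SK}$, and it remains only to invoke the two standard converse bounds $C_{SK}\le I(X;Y)$ and $C_{SK}\le I(X;Y|Z)$ \cite{Maurer_SKG93, Ahlswede_SKG93}, whose minimum is the claimed right-hand side.

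For the lower bound, by symmetry under interchanging which terminal initiates the public discussion it suffices to show that $R=I(X;Y)-I(X;Z)-o(1)$ is achievable; running the analogous scheme with Bob as the initiator then yields $I(Y;X)-I(Y;Z)-o(1)$, and the maximum of the two follows. I would use the CWTC of Section~\ref{Sec_system_model} with $r=n$: generate the codebook $\{U^n(m,w)\}$, $m\in\{1,\dots,L\}$, $w\in\{1,\dots,L_1\}$, i.i.d.\ from the uniform distribution $P_U$ on $\mathcal{X}$; Alice, who observes $X^n$ and selects $(m,w)$ uniformly, sends the public message $\bm{F}=X^n\ominus U^n(m,w)$, so that the equivalent channel outputs are $(Y^n,\bm{F})$ at Bob and $(Z^n,\bm{F})$ at Willie through $\prod_iP_{YF|U}$ and $\prod_iP_{ZF|U}$ respectively, with $F=X\ominus U$. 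Bob decodes $(\hat m,\hat w)$ by joint typicality with respect to $P_{UYF}$, which yields $\mathds{P}_e\to0$ provided $R+R_1<I(U;Y,F)$; then $K=m$ satisfies \eqref{EQ_Pe}, and \eqref{EQ_uniformity} holds by the uniform choice of $m$.

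For the effective secrecy \eqref{EQ_sum_of_constraints}, it is convenient to use the rearrangement \eqref{EQ_effective_secrecy}: it suffices to make $D(P_{KZ^n\bm{F}}\|P_KQ_{Z^n\bm{F}})=\mathbb{E}_m\,D(P_{Z^n\bm{F}|K=m}\|Q_{Z^n\bm{F}})$ small. For each fixed $m$ the sub-codebook $\{U^n(m,w)\}_{w}$, passed through $\prod_iP_{ZF|U}$, induces $P_{Z^n\bm{F}|K=m}$, and by the soft-covering lemma \cite{Hou_stealth} its expected divergence (over the codebook) from the i.i.d.\ target $Q_{Z^n\bm{F}}=\prod_iP_{ZF}$ decays exponentially as soon as $R_1>I(U;Z,F)$; a union bound over $m$ and a standard selection argument then fix a single codebook valid for all $m$, giving \eqref{EQ_sum_of_constraints} (and, as a by-product, both the non-confusion and the non-stealth sub-constraints). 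It then remains to reconcile the rates: a direct computation for this construction gives $I(U;Y,F)=I(X;Y)+\bigl(\log|\mathcal{X}|-H(X)\bigr)$ and $I(U;Z,F)=I(X;Z)+\bigl(\log|\mathcal{X}|-H(X)\bigr)$, because $\bm{F}$ is uniform and independent of $Y^n$ (resp.\ $Z^n$) while $H(Y,F|U)=H(X,Y)$ (resp.\ $H(Z,F|U)=H(X,Z)$). Thus picking $R_1$ just above $I(X;Z)+(\log|\mathcal{X}|-H(X))$ and $R+R_1$ just below $I(X;Y)+(\log|\mathcal{X}|-H(X))$ is feasible, the common overhead cancels, and $R=I(X;Y)-I(X;Z)-o(1)$ is achievable under all of \eqref{EQ_Pe}--\eqref{EQ_sum_of_constraints}.

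The step I expect to be the main obstacle is precisely the effective-secrecy analysis: the soft-covering must be performed for the \emph{joint} observation $(Z^n,\bm{F})$ rather than for $Z^n$ alone, so the resolvability rate is $I(U;Z,F)$ and not $I(U;Z)$, and one then has to verify that this enlarged requirement remains compatible with Bob's reliability constraint --- equivalently, that the $\log|\mathcal{X}|-H(X)$ overhead caused by publicly revealing $\bm{F}$ is identical on Bob's and on Willie's side and therefore cancels in $R=I(U;Y,F)-I(U;Z,F)$. Making the resolvability exponent dominate the union bound over the $L=2^{nR}$ key values, and keeping the residual $o(1/\sqrt n)$ terms consistent with Lemma~\ref{Lemma_R_SSK}, are the remaining technical points.
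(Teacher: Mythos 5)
Your proposal is correct and follows essentially the same route as the paper: the converse observes that effective secrecy dominates strong secrecy and reuses the standard source-model bounds $I(X;Y)$ and $I(X;Y|Z)$, while the achievability uses Maurer's conceptual wiretap channel with public message $\bm F=U^n\oplus X^n$ and a resolvability (soft-covering) argument for the \emph{joint} output $(Z^n,\bm F)$ at confusion rate above $I(U;Z,F)=\log|\mathcal{X}|-H(X|Z)$, against reliability rate $I(U;Y,F)=\log|\mathcal{X}|-H(X|Y)$, so the $\log|\mathcal{X}|-H(X)$ overhead cancels exactly as in the paper's computation (the paper writes the resolvability threshold as $I(F;UZ)$, which coincides with your $I(U;Z,F)$ here since $F\perp Z$ and $U\perp Z$). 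The only substantive difference is that the paper does not invoke the soft-covering lemma of the stealth literature as a black box but re-derives the bound $\mathds{E}_{\mathcal{C}}[D(P_{Z^n\bm F|K}\|Q_{Z^n\bm F}|P_K)]\leq \mathds{E}\bigl[\log\bigl(P_{Z^n\bm F|U^n}/(L_1Q_{Z^n\bm F})+1\bigr)\bigr]$ in its appendix, precisely because the extra public output $\bm F$ makes the conceptual channel differ from the one treated there --- a point you correctly flag as the main technical obstacle.
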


\begin{rmk}\normalfont
Note that we do not directly apply the effective secrecy \cite{Hou_stealth} which includes both secrecy and stealth constraints, to the public discussion in SKG problems. In contrast, we impose the stealth constraint to the CWTC \cite{Maurer_SKG93} of the SKG and the secrecy constraint is still applied to the source-model SKG.
\end{rmk}

Unlike the wiretap channel with the stealth constraint whose capacity result is shown in \cite{Hou_stealth}, Theorem \ref{TH_UB_LB} only provides the lower and upper bounds. These bounds coincide with those of the secret key capacity without the stealth constraint. However, the same upper and lower bounds of the secret key capacity do not guarantee that the secret key capacity is unchanged when we impose the additional stealth constraint. Therefore, we consider the following case in which the two bounds match. This case leads to the fact that we can get the SSKG for free even with the additional stealth constraint.

\begin{coro}\label{Coro_SK_capacity}
For the discrete memoryless source $(\mathcal{X},\mathcal{Y},\mathcal{Z},P_{XYZ})$, if $X- Y- Z$ forms a Markov chain, then
\begin{align}
C_{SK}^{Eff}=I(X;Y)-I(X;Z).
\end{align}
\end{coro}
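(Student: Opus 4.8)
The plan is to invoke Theorem~\ref{TH_UB_LB} directly and show that, under the hypothesis $X-Y-Z$, the lower and upper bounds displayed in \eqref{EQ_C_LB_UB} collapse onto the single value $I(X;Y)-I(X;Z)$. Since in \eqref{EQ_C_LB_UB} it is $C_{SK}^{Eff}$ itself (not merely the classical SK capacity) that is sandwiched, squeezing the two sides together is all that is required; no separate achievability or converse argument is needed beyond what Theorem~\ref{TH_UB_LB} already supplies.

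First I would treat the upper bound $\min\{I(X;Y),\,I(X;Y|Z)\}$. Expanding $I(X;YZ)$ by the chain rule in two ways gives $I(X;Z)+I(X;Y|Z)=I(X;Y)+I(X;Z|Y)$; the Markov relation $X-Y-Z$ forces $I(X;Z|Y)=0$, hence $I(X;Y|Z)=I(X;Y)-I(X;Z)$. Because $I(X;Z)\ge 0$ we get $I(X;Y|Z)\le I(X;Y)$, so the minimum equals $I(X;Y)-I(X;Z)$.

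Next I would treat the lower bound $\max\{I(X;Y)-I(X;Z),\,I(Y;X)-I(Y;Z)\}$. The data processing inequality applied along $X-Y-Z$ yields $I(X;Z)\le I(Y;Z)$, so $I(X;Y)-I(X;Z)\ge I(X;Y)-I(Y;Z)=I(Y;X)-I(Y;Z)$, and the maximum is the first term $I(X;Y)-I(X;Z)$. Combining the two computations, \eqref{EQ_C_LB_UB} becomes $I(X;Y)-I(X;Z)\le C_{SK}^{Eff}\le I(X;Y)-I(X;Z)$, which is the assertion.

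There is essentially no hard step: the corollary is a one-line consequence of Theorem~\ref{TH_UB_LB} together with the chain rule and the data processing inequality. The only point worth emphasizing — and presumably the reason the authors isolate it as a corollary right after the preceding remark — is conceptual rather than technical: matching lower and upper bounds \emph{on $C_{SK}^{Eff}$} genuinely determine the stealthy SK capacity, whereas merely knowing that the bounds of Theorem~\ref{TH_UB_LB} coincide with the classical SK bounds would not, by itself, certify that the added stealth constraint is costless.
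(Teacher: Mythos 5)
Your proof is correct and follows exactly the route the paper intends: the corollary is obtained by squeezing the two bounds of Theorem~\ref{TH_UB_LB}, using $I(X;Z|Y)=0$ to reduce the upper bound to $I(X;Y)-I(X;Z)$ and the data processing inequality $I(X;Z)\le I(Y;Z)$ to identify the maximum in the lower bound. Your closing remark that the bounds sandwich $C_{SK}^{Eff}$ itself (not just the classical SK capacity) is precisely the point the paper emphasizes in the paragraph preceding the corollary.
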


\begin{rmk}
By applying the quantization scheme used in \cite[Proof of Theorem 3.3 and
Remark 3.8]{Kim_lecture} or \cite[Appendix B]{Khisti_Secure_fading_BC}, we may extend the SK rate results in Theorem \ref{TH_UB_LB} and Corollary \ref{Coro_SK_capacity} to the Gaussian source.
\end{rmk}
The proofs are derived in the following subsections.

\subsection{Lower Bound of $C_{SK}^{Eff}$} \label{Sec_LB}
To derive the lower bound of $C_{SK}^{Eff}$ in \eqref{EQ_C_LB_UB}, we first decompose the RHS of \eqref{EQ_effective_secrecy} as follows:
\begin{align}\label{EQ_sum_of_constraints3}
D( P_{KZ^n\bm {F}}||P_KQ_{Z^n\bm {F}})&\overset{(a)}=D(P_K||P_K)+D( P_{Z^n\bm {F}|K}||Q_{Z^n\bm {F}}|P_K)\notag\\
&=D( P_{Z^n\bm {F}|K}||Q_{Z^n\bm {F}}|P_K),
\end{align}
where (a) follows the chain rule of divergence from \cite[Th.2.2.2]{Polyanskiy_LNIT}.

Based on the CWTC, we then apply the channel resolvability analysis \cite{Hou_channel_resolvability} to find the rate constraint on $R_1$, i.e., the rate of confusion messages for the codebook generation, which guarantees that the effective secrecy constraint \eqref{EQ_sum_of_constraints} is fulfilled.

From the random coding analysis derived in \ref{APP_Th_LB}, we have:
\begin{align}\label{EQ_LB_result}
\mathds{E}_{\mathcal{C}}[D(P_{Z^n\bm {F}|K}||Q_{Z^n\bm {F}}|P_K)]\leq \mathds{E}_{Z^n\bm {F} U^n}\left[\log\left(\frac{P_{Z^n\bm {F}|U^n}}{L_1Q_{Z^n\bm {F}}}+1\right)\right],
\end{align}
where we recall that $L_1=2^{\lceil nR_1\rceil}$ is the number of confusion message per bin, which is to be designed to guarantee that \eqref{EQ_LB_result} is asymptotically zero. The main difference of this proof to that in \cite{Hou_stealth} is that, by constructing a CWTC for the considered SKG model, we introduce an additional channel output at both Bob and Willie. This difference makes the considered conceptual channel distinct from that in \cite{Hou_stealth}, and those results cannot directly be applied.

To proceed, we reexpress the ratio in the logarithm on the right hand side (RHS) of \eqref{EQ_LB_result} as follows:
\begin{align}
\frac{P_{Z^n\bm {F}|U^n}}{L_1Q_{Z^n\bm {F}}}&\overset{(a)}=\frac{P_{Z^n\bm {F}U^n}}{L_1P_{U^n}}\frac{1}{P_{Z^n}Q_{\bm {F}}}\notag\\
&\overset{(b)}=\frac{P_{Z^n\bm {F}U^n}}{L_1P_{Z^nU^n}}\frac{1}{Q_{\bm {F}}}\notag\\
&=\frac{P_{\bm {F}|Z^nU^n}}{L_1Q_{\bm {F}}},
\end{align}
where (a) is due to the fact that $Z^n$ and $\bm {F}$ are independent when the discussion $\bm F$ is meaningless, whose pmf is denoted by $Q_{\bm F}$; (b) is due to the fact that $U^n$ is selected to be independent to $Z^n$, i.e., $P_{Z^nU^n}=P_{Z^n}P_{U^n}$.

Then we can rewrite \eqref{EQ_LB_result} as follows:
\begin{align}
\mathds{E}_{\mathcal{C}}[D(P_{Z^n\bm {F}|K}||Q_{Z^n\bm {F}}|P_K)]\leq\mathds{E}_{Z^n\bm {F}U^n}\left[\log\left(\frac{P_{\bm {F}|Z^nU^n}}{L_1Q_{\bm {F}}}+1\right)\right].\label{EQ_new_obj}
\end{align}
Similar to \cite{Hou_stealth}, the RHS of \eqref{EQ_new_obj} can be divided into two cases as follows according to whether $(z^n,\,\bm {f},\,u^n)$ are jointly typical or not:
\begin{align}
d_1&= \sum_{\substack{(z^n,\,\bm {f},\,u^n)\in\\ T_{\delta}^n(P_{Z^n,\,\bm {F},\,U^n})}} P_{Z^n \bm {F} U^n}(z^n,\bm {f},u^n)\log\left(\frac{P_{\bm {F}|U^nZ^n}(\bm {f}|u^nz^n)}{L_1Q_{\bm {F}}(\bm {f})}+1\right),\notag\\
d_2&= \sum_{\substack{(z^n,\,\bm {f},\,u^n)\notin\\ T_{\delta}^n(P_{Z^n,\,\bm {F},\,U^n})}} P_{Z^n \bm {F}U^n}(z^n,\bm {f},u^n)\log\left(\frac{P_{\bm {F}|U^nZ^n}(\bm {f}|u^nz^n)}{L_1Q_{\bm {F}}(\bm {f})}+1\right),\notag
\end{align}
where $T_{\delta}^n$ follows the $\delta$-\textit{robust typicality} \cite{Orlitsky_computing} definition for the subsequent derivation. Note that the set of sequences $x^n$ satisfying the definition of robust typicality is denoted by $T_{\delta}^n(P_X)$.

\begin{rmk}\normalfont
Note that even though $Z^n$ and $\bm {F}$ are independent and $Z^n$ and $U^n$ are independent by assumption, that does not mean $Z^n$, $\bm {F}$, and $U^n$ are necessarily generated according to $P_{Z^n,\,\bm {F},\,U^n}=P_{Z^n}P_{\bm {F},\,U^n}$ or $P_{Z^n,\,\bm {F},\,U^n}=P_{Z^n}P_{\bm {F}}P_{U^n}$. In fact, since pairwise independence does not imply mutual independence \cite[Chapter 7.1, 7.2]{Stoyanov_counter_example}, there exists joint distribution $P_{Z^n,\,\bm {F},\,U^n}$ such that we can apply the jointly typical arguments.
\end{rmk}

%

The Chernoff bound and the important upperbound which will be used later are restated in the following.

\begin{lemma}(Chernoff Bound \cite[Lemma 16]{Orlitsky_computing}:) For every $a\in\mathcal{X},$\\
\begin{align}
P\left(\frac{N(a|x^n)}{n}\leq (1+\delta)P_X(a)\right)\leq e^{-\delta^2P_X(a)n/3}.
\end{align}
\end{lemma}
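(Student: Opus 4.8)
The plan is to prove the bound by the standard Chernoff (exponential Markov) method applied to the \emph{lower} deviation of the empirical count of $a$, which is the direction matching the event $\{N(a|x^n)/n\le (1+\delta)P_X(a)\}$. Write $N\triangleq N(a|x^n)$ and $p\triangleq P_X(a)$. First I would realise the count as a sum of i.i.d.\ indicators: under $x^n\sim P_X^{n}$, set $B_i=\1[x_i=a]$, so that $N=\sum_{i=1}^{n}B_i$ with $B_i\sim\mathrm{Bernoulli}(p)$ and $\mathds{E}[N]=np$. Because the threshold $(1+\delta)p$ is compared from \emph{below}, the correct exponential weight is negative: for every $s>0$,
\begin{align}
P\!\left(\frac{N}{n}\le (1+\delta)p\right)
&=P\!\left(e^{-sN}\ge e^{-s(1+\delta)np}\right)\notag\\
&\le e^{s(1+\delta)np}\,\mathds{E}\!\left[e^{-sN}\right],
\end{align}
the last step being Markov's inequality.

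Next I would bound the moment generating function. By independence and the Bernoulli law, $\mathds{E}[e^{-sN}]=\bigl(1+p(e^{-s}-1)\bigr)^{n}\le \exp\!\bigl(np(e^{-s}-1)\bigr)$, using $1+u\le e^{u}$. Substituting and minimising the resulting envelope $\exp\!\bigl(np[s(1+\delta)+e^{-s}-1]\bigr)$ over $s>0$, the stationarity condition gives $s^\star=-\ln(1+\delta)$ (which is positive exactly when the threshold lies below the mean). Back-substitution then collapses the bracket into the multiplicative-Chernoff exponent
\begin{align}
P\!\left(\frac{N}{n}\le (1+\delta)p\right)\le \exp\!\bigl(-np\,\phi(\delta)\bigr),\qquad \phi(\delta)\triangleq (1+\delta)\ln(1+\delta)-\delta,
\end{align}
where the natural logarithm matches the base of the exponential in the statement.

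Finally I would control $\phi$. The bound is informative precisely in the lower-deviation regime $\delta<0$ (for $\delta\ge 0$ the event has probability near one and the estimate is vacuous), which is also where $s^\star>0$ as required. I expect the main obstacle to be the \emph{uniform} quadratic lower bound $\phi(\delta)\ge \tfrac12\delta^{2}\ge \tfrac13\delta^{2}$: it must hold throughout the range, not merely infinitesimally. I would verify it by the derivative argument $\phi(0)=\phi'(0)=0$, $\phi'(\delta)=\ln(1+\delta)$, $\phi''(\delta)=\tfrac{1}{1+\delta}$, from which $\psi(\delta)\triangleq\phi(\delta)-\tfrac12\delta^{2}$ satisfies $\psi(0)=\psi'(0)=0$ and $\psi''(\delta)=-\tfrac{\delta}{1+\delta}>0$ for $-1<\delta<0$, so that $\psi\ge 0$ there by convexity. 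The slacker constant $\tfrac13$ is the one tabulated in \cite[Lemma~16]{Orlitsky_computing}. Chaining the three displays then delivers
\begin{align}
P\!\left(\frac{N(a|x^n)}{n}\le (1+\delta)P_X(a)\right)\le \exp\!\bigl(-np\,\phi(\delta)\bigr)\le e^{-\delta^{2}P_X(a)\,n/3},
\end{align}
as claimed.
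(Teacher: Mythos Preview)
The paper does not supply a proof: the lemma is simply restated from \cite[Lemma~16]{Orlitsky_computing} and invoked as a black box to obtain the non-typicality bound in the lemma that follows it. Your argument is the standard exponential--Markov derivation of the multiplicative Chernoff bound, and every step is correct for the regime $-1<\delta<0$ that you isolate.

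One point is worth recording. As printed, the inequality cannot hold for $\delta>0$: the left-hand probability tends to $1$ while the right-hand side tends to $0$ as $n\to\infty$, so the statement is false in that range. The source lemma is the usual upper-tail form $P\bigl(N(a|x^n)/n\ge(1+\delta)P_X(a)\bigr)\le e^{-\delta^{2}P_X(a)n/3}$ for $0<\delta\le 1$, together with the matching lower tail; that two-sided version is exactly what the union bound in the subsequent non-typicality lemma requires, and the ``$\le$'' in the paper's display is evidently a transcription slip. You correctly detected the discrepancy and proved the only reading under which the printed statement is nontrivial. The upper-tail direction follows by the same computation with the sign of $s$ reversed and the inequality $\phi(\delta)\ge\delta^{2}/(2+\delta)\ge\delta^{2}/3$ on $(0,1]$ in place of your $\phi(\delta)\ge\delta^{2}/2$ on $(-1,0)$.
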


\begin{lemma}(Upper bound of the probability of non-typical set \cite[Lemma 17]{Orlitsky_computing}:)\label{Lemma_UB_non-typical_set}
\begin{align}
P(x^n\notin T_{\delta})\leq 2|S_X|e^{-\delta^2\mu_X n/3},
\end{align}
where $S_X\triangleq \{x\in\mathcal{X}:\,P(x)>0\}$ and $\mu_x\triangleq\min_{x\in S_X}P(x)$.
\end{lemma}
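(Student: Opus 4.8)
The plan is to prove the bound by a union bound over the alphabet combined with the two-sided multiplicative Chernoff estimate from the preceding lemma. First I would unfold the definition of $\delta$-robust typicality: $x^n\in T_\delta$ holds precisely when $\left|N(a|x^n)/n-P_X(a)\right|\leq \delta P_X(a)$ is satisfied for every $a\in\mathcal{X}$. Taking complements, $x^n\notin T_\delta$ means at least one symbol violates this inequality, so by the union bound
\begin{align}
P(x^n\notin T_\delta)\leq \sum_{a\in\mathcal{X}} P\!\left(\left|\frac{N(a|x^n)}{n}-P_X(a)\right|> \delta P_X(a)\right).\notag
\end{align}
For symbols $a\notin S_X$ we have $P_X(a)=0$, hence $N(a|x^n)=0$ almost surely and the constraint is met trivially; such terms contribute nothing, so the sum effectively ranges only over $S_X$, which is what produces the factor $|S_X|$.

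Next, for each fixed $a\in S_X$ I would observe that $N(a|x^n)=\sum_{i=1}^n \mathds{1}[x_i=a]$ is a sum of $n$ i.i.d.\ Bernoulli$(P_X(a))$ variables with mean $nP_X(a)$, and split the two-sided deviation into an upper tail $N(a|x^n)/n>(1+\delta)P_X(a)$ and a lower tail $N(a|x^n)/n<(1-\delta)P_X(a)$. Applying the multiplicative Chernoff bound (Lemma 1) to each tail gives a bound of $e^{-\delta^2 P_X(a)n/3}$ per side, so that each contributing symbol is controlled by $2e^{-\delta^2 P_X(a)n/3}$.

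Finally I would make the exponent uniform in $a$: since $P_X(a)\geq \mu_X$ for every $a\in S_X$, each summand is at most $2e^{-\delta^2\mu_X n/3}$, and summing the $|S_X|$ contributing terms yields
\begin{align}
P(x^n\notin T_\delta)\leq \sum_{a\in S_X} 2e^{-\delta^2 P_X(a)n/3}\leq 2|S_X|e^{-\delta^2\mu_X n/3},\notag
\end{align}
which is exactly the claim. The only real subtlety, and the step I would flag as the main obstacle, is that Lemma 1 as stated controls essentially one tail, so the complementary lower-tail estimate must also be invoked; the standard multiplicative lower-tail Chernoff bound carries exponent $\delta^2/2\geq \delta^2/3$, so the same $e^{-\delta^2 P_X(a)n/3}$ envelope remains valid, and it is precisely these two deviation directions that account for the leading constant $2$ in the final bound.
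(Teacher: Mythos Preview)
Your argument is correct and is exactly the standard route: union bound over the support $S_X$, two-sided multiplicative Chernoff for each symbol, then uniformize the exponent via $P_X(a)\geq\mu_X$. The paper itself does not supply a proof of this lemma; it is simply quoted from \cite[Lemma~17]{Orlitsky_computing}, whose proof proceeds precisely along the lines you describe, so there is nothing to compare beyond noting that your write-up matches the cited source.
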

Note that the total rate constraint in the CWTC, i.e., Bob should be able to decode both the secret and confusion messages successfully, which is a point to point transmission problem without secrecy, can be seen from \cite{Orlitsky_computing}. Therefore, we neglect the proof.

Next, we derive the constraint on $R_1$ as follows:
\begin{align}
d_1&\overset{(a)}\leq \Bigg(\sum_{(z^n,\,\bm {f},\,u^n)\in T_{\delta}^n(P_{Z^n\bm {F}U^n})} P_{Z^n\bm {F}U^n}(z^n,\bm {f},u^n)\Bigg)\log\left(\frac{2^{-n[H(F|UZ)-\delta]}}{L_1 2^{-n(1+\epsilon)H(F)}}+1\right)\notag\\
&\overset{(b)}\leq \log\left(\frac{2^{-n[H(F|UZ)-\delta]}}{L_1 2^{-n(1+\epsilon)H(F)}}+1\right)\notag\\
&\overset{(c)}= \log\left(2^{-n\left(R_1-I(F;UZ)-\epsilon'\right)}+1\right),\label{EQ_d1}
\end{align}
where (a) is by \cite[Lemma 18, Lemma 20]{Orlitsky_computing} for the typicality and conditional typicality bounds; (b) is by the fact that the sum probability of jointly typical set is less than 1; (c) is by the definition of $L_1$ and $\epsilon'\triangleq \epsilon(1+H(U))$. Then we know that $d_1\rightarrow 0$ when $n\rightarrow\infty$ if
\begin{align}\label{EQ_R_confusion}
R_1>I(F;UZ)\overset{(a)}=I(U\oplus X;UZ)\overset{(b)}=H(U)-H(X|Z),
\end{align}
where (a) is by the specific use of the public discussion according to \cite[Theorem 3]{Maurer_SKG93}, $\oplus$ is the modulo addition in $\mathcal{X}$; (b) is due to the fact that $U$ is uniformly distributed followed by the crypto lemma.
In addition, we can derive that $d_2\rightarrow 0$ as $n\rightarrow\infty$ as follows:
\begin{align}
d_2&\overset{(a)} \leq \sum_{(z^n,u^n,\,\bm {f})\notin T_{\delta}^n(P_{Z^n,U^n,\bm {F}})} P_{Z^n \bm {F}U^n}(z^n,\bm {f},u^n)\log\left(\frac{1}{Q_{\bm {F}}(\bm {f})}+1\right)\notag\\
&\overset{(b)} \leq \sum_{(z^n,u^n,\,\bm {f})\notin T_{\delta}^n(P_{Z^n,U^n,\bm {F}})} P_{Z^n \bm {F}U^n}(z^n,\bm {f},u^n)\log\left(\frac{1}{\mu_{ {f}}}+1\right)\notag\\
&\overset{(c)}= P((z^n,u^n,\,\bm {f})\notin T_{\delta}^n(P_{Z^n,U^n,\bm {F}})) \log\left(\frac{1}{\mu_{f}}+1\right)\notag\\
&\overset{(d)}\leq 2|S_{ZUF}|e^{-\delta^2\mu_{ZUF} n/3}\log\left(\frac{1}{\mu_{f}}+1\right),\label{EQ_d2}
\end{align}
where (a) is due to the fact that $P_{\bm {F}|U^nZ^n}(\bm {f}|u^nz^n)\leq 1$ and $L_1>1$, and therefore, removing $P_{\bm {F}|U^nZ^n}(\bm {f}|u^nz^n)/L_1$ will upper bound $d_2$; (b) is by lower bounding $Q_{\bm {F}}^n(\bm {f})$ with $\mu_{f}=\min_{\bm {f}\in S_{\bm {F}}}Q_{\bm {F}}(\bm {f})$, where $S_{\bm {F}}\triangleq \{Q_{\bm {F}}\in\mathcal{X}^r: P(\bm {f})>0\}$; (c) is by definition of probability; (d) is by Lemma \ref{Lemma_UB_non-typical_set}. From \eqref{EQ_d2} it can be easily seen that if $n\rightarrow\infty$, $d_2\rightarrow 0$ exponentially fast.

Then from \eqref{EQ_d1} and \eqref{EQ_d2} it is clear that \eqref{EQ_sum_of_constraints} is fulfilled.

From the CWTC construction we know that the following rate between Alice and Bob is achievable:
\begin{align}
n(R+R_1)&\leq I(U^n;U^n\oplus X^n,Y^n)\notag\\
&= H(U^n\oplus X^n,Y^n)-H(U^n\oplus X^n,Y^n|U^n)\notag\\
&\overset{(a)}= H(U^n)+H(Y^n)-H(X^n,Y^n)\notag\\
&= H(U^n)-H(X^n|Y^n),\label{EQ_sum_rate_constraint}
\end{align}
where (a) is due to the crypto lemma and the selection of $U^n$ is independent to $Y^n$. Then from \eqref{EQ_R_confusion} and \eqref{EQ_sum_rate_constraint}, we can derive the achievable SSK rate as follows:
\begin{align}
nR&\leq H(U^n)-H(X^n|Y^n) -nR_1 \notag\\
&\overset{(a)}< n[H(X|Z)-H(X|Y)]\notag\\
&= n[I(X;Y)-I(X;Z)],
\end{align}
where (a) is by substituting \eqref{EQ_R_confusion} in addition to the assumption of memoryless and independent and identically distributed (i.i.d.) common randomness.
Due to symmetry between Alice and Bob, their role can be exchanged and the other lower bound derived. This completes the proof.

Note that from the chain rule of the divergence we know that
\begin{align}\label{EQ_DPI_stealth_constraints}
D(P_{Z^n\bm F}||Q_{Z^n\bm F})=D(P_{\bm F}||Q_{\bm F})+D(P_{Z^n|\bm F}||Q_{Z^n|\bm F}|P_{\bm F}).
\end{align}
Since the left hand side of \eqref{EQ_DPI_stealth_constraints} is constrained by \eqref{EQ_sum_of_constraints} and the conditional divergence is nonnegative, we know that the effective secrecy of the SSKG implies $D(P_{\bm F}||Q_{\bm F})\leq \epsilon$.

\begin{rmk}
When applying the crypto lemma in \eqref{EQ_R_confusion} or \eqref{EQ_sum_rate_constraint} for unbounded $X$, e.g., Gaussian cases, we may follow the argument in \cite[Appendix B]{Bennatan_infinite_modulo}. In particular, a mutual information gap $\delta_1$ can be introduced. Note that $\delta_1\rightarrow 0$ when the modulo size approaches infinity.
\end{rmk}

\subsection{Upper Bound of $C_{SK}^{Eff}$}
In this subsection we derive the upper bound of $C_{SK}^{Eff}$ as follows, which is mainly adapted from the normal steps to derive the upper bound of source-model SKG, e.g., \cite[Sec. 4.2.1]{Bloch_book}, with modification to encompass the effective secrecy constraint:
\begin{align}
nR&\leq\log\lceil2^{nR}\rceil\overset{(a)}\leq H(K)+\epsilon\notag\\
&\overset{(b)}\leq H(K|\bm {F},Z^n)+D( P_{KZ^n\bm {F}}||P_KP_{Z^n\bm {F}})+D(P_{Z^n\bm {F}}||Q_{Z^n\bm {F}})+\epsilon\notag\\
&\overset{(c)}\leq H(K|\bm {F},Z^n)+2\epsilon\notag\\
&= I(K;\hat{K}|\bm {F},Z^n)+H(K|\hat{K},\bm {F},Z^n)+2\epsilon\notag\\
&\overset{(d)}\leq I(K;\hat{K}|\bm {F},Z^n)+\epsilon_2\notag\\
&\overset{(e)}\leq I(X^nR_X\bm {F}^B;Y^nR_Y\bm {F}^A|\bm {F},Z^n)+\epsilon_2\notag\\
&\overset{}\leq I(X^nR_X;Y^nR_Y|\bm {F},Z^n)+\epsilon_2\notag\\
&\overset{(f)}\leq I(X^nR_X;Y^nR_Y|Z^n)+\epsilon_2\notag\\
&\overset{(g)}= I(X^n;Y^n|Z^n)+\epsilon_2\notag\\
&\overset{(h)}= nI(X;Y|Z)+\epsilon_2,\notag
\end{align}
where (a) is by \eqref{EQ_uniformity}; (b) is by definition of divergence and the fact that divergence is positive; (c) is from \eqref{EQ_sum_of_constraints}; (d) is due to Fano's inequality: $H(K|\hat{K},\bm {F},Z^n)\leq\epsilon_1$ and by defining $\epsilon_2\triangleq 2\epsilon+\epsilon_1$; (e) follows the chain rule $K- X^nR_X\bm {F}^B- Y^nR_Y\bm {F}^A- \hat{K}$, where $R_X$ and $R_Y$ are the local randomness, $\bm {F}^A$ and $\bm {F}^B$ are the discussion signals sent by Alice and Bob, respectively, and $\bm F=(\bm F^A,\,\bm F^B)$; (f) is due to \cite[Lemma 4.2]{Bloch_book}; (g) is due to the fact that the local randomness $(R_X,R_Y)$ is selected to be independent to $(X^n,Y^n,Z^n)$; (h) follows from the fact that $(X,Y,Z)$ is a memoryless source.

Following the same steps, we can derive another upper bound without conditioning on $Z$, which completes the proof.

\begin{rmk}\normalfont
Other tighter outer bounds derived by, e.g., the \textit{intrinsic conditional information} \cite[P. 130]{Bloch_book} and \textit{reduced intrinsic conditional information} \cite[P. 133]{Bloch_book} can be proved unchanged even when the stealthy public discussion is considered. This is because that those derivation is irrelevant to the stealth constraint.
\end{rmk}

\section{On the Sufficient Conditions for Degraded Common Randomness}\label{Sec_discussion}
In the following, we prove that the sufficient condition to achieve $C_{SK}^{Eff}=I(X;Y)-I(X;Z)$, i.e., the common randomness forming a Markov chain $X-Y-Z$, which is physically degraded, can be relaxed to be stochastically degraded.
We then show that the relaxed condition can be fulfilled in a broader sense by considering Maurer's fast fading Gaussian (satellite) model \cite{Naito_satellite}.
More specifically, there exists a central random source $S$ passing through fast fading additive white Gaussian noise (AWGN) channels and then observed as $X$, $Y$, and $Z$ at Alice, Bob, and Willie, respectively.
We apply the usual stochastic order \cite{shaked_stochastic_order} to derive a sufficient condition on the fading channels such that $C_{SK}^{Eff}$ is achieved.
The derived sufficient condition provides a simple way to verify the stochastic degradedness and thereby to identify the effective SK capacity. We first give the definition on the degraded relation between the common randomness followed by our result.
\begin{definition}\normalfont\label{Def_PHY-Sto_D}
A source of common randomness $(\mathcal{X},\mathcal{Y},\mathcal{Z},P_{X\tilde{Y}\tilde{Z}})$ is called stochastically degraded if the conditional marginal distributions $P_{\tilde{Y}|X}$ and $P_{\tilde{Z}|X}$ are identical to those of another source of common randomness $(\mathcal{X},\mathcal{Y},\mathcal{Z},P_{XYZ})$ following the physical degradedness, i.e., $X-Y-Z$.
\end{definition}
\begin{Theo}\label{Th_SK_capacity_condition}
If a source of common randomness $(\mathcal{X},\mathcal{Y},\mathcal{Z},P_{X\tilde{Y}\tilde{Z}})$ is stochastically degraded such that $P_{\tilde{Y}|X}=P_{Y|X}$ and $P_{\tilde{Z}|X}=P_{Z|X}$, where $X-Y-Z$, then $C_{SK}^{Eff}=I(X;Y)-I(X;Z)$.
\end{Theo}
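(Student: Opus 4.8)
The plan is to reduce the stochastic case to the physically degraded case already handled by Corollary~\ref{Coro_SK_capacity}, using the fact that the relevant quantities in Theorem~\ref{TH_UB_LB} depend on the joint source distribution only through the conditional marginals $P_{Y|X}$ and $P_{Z|X}$. Concretely, I would first observe that both the lower bound $\max\{I(X;Y)-I(X;Z),\,I(Y;X)-I(Y;Z)\}$ and the upper bound $\min\{I(X;Y),\,I(X;Y|Z)\}$ can be rewritten so that the only place the full joint law $P_{X\tilde{Y}\tilde{Z}}$ enters is through the pairwise marginals $P_{X\tilde{Y}}$ and $P_{X\tilde{Z}}$ --- except for the term $I(X;\tilde{Y}|\tilde{Z})$ in the upper bound, which genuinely depends on the triple. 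The key step is therefore to show that under the stealth-augmented protocol, the achievable SSK rate and the converse bound actually coincide at $I(X;Y)-I(X;Z)$ for the degraded family, and that this value is invariant under replacing the joint law by any other one with the same conditional marginals.

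The main steps, in order, are as follows. First I would invoke Corollary~\ref{Coro_SK_capacity} for the genuinely physically degraded source $(\mathcal{X},\mathcal{Y},\mathcal{Z},P_{XYZ})$ with $X-Y-Z$, giving $C_{SK}^{Eff}(P_{XYZ})=I(X;Y)-I(X;Z)$. Second, for the achievability direction on the stochastically degraded source $P_{X\tilde{Y}\tilde{Z}}$, I note that the lower-bound construction in Section~\ref{Sec_LB} --- the CWTC, the confusion-rate constraint \eqref{EQ_R_confusion}, and the sum-rate constraint \eqref{EQ_sum_rate_constraint} --- only ever uses $P_{X\tilde{Z}}$ (through $H(X|\tilde Z)$) and $P_{X\tilde{Y}}$ (through $H(X|\tilde Y)$); since $P_{\tilde{Y}|X}=P_{Y|X}$ and $P_{\tilde{Z}|X}=P_{Z|X}$ and the $X$-marginal is common, these equal $H(X|Z)$ and $H(X|Y)$ respectively, so the same achievable rate $I(X;Y)-I(X;Z)$ is attained. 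Third, for the converse, I would apply the upper bound $C_{SK}^{Eff}\le I(X;\tilde{Y})$ from Theorem~\ref{TH_UB_LB} directly: since $I(X;\tilde Y)=I(X;Y)$ by equality of the marginals, and the genuinely degraded chain satisfies $I(X;Y)-I(X;Z)\le I(X;Y)$ with the lower bound of Corollary~\ref{Coro_SK_capacity} already matching, I conclude $C_{SK}^{Eff}(P_{X\tilde Y\tilde Z})=I(X;Y)-I(X;Z)$.

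The subtlety --- and the step I expect to require the most care --- is making the converse argument airtight. The upper bound $I(X;\tilde Y)$ alone does not pin the capacity down to $I(X;Y)-I(X;Z)$; one must argue that the tighter of the two achievability expressions meets a matching converse on the stochastically degraded source. The cleanest route is: since the operational quantity $C_{SK}^{Eff}$ for the source-model SKG depends on $P_{X\tilde Y\tilde Z}$ only through $(P_{X\tilde Y},P_{X\tilde Z})$ --- because, by \cite[Lemma~4.2]{Bloch_book} and the Markov structure $K-X^nR_X\bm F^B-Y^nR_Y\bm F^A-\hat K$ used in the converse of the previous subsection, Willie's side information $\tilde Z^n$ can be replaced by a stochastically-degraded version $Z^n$ (generated by passing $X^n$ through $P_{Z|X}^{\otimes n}$ conditionally independently of $Y^n$) without changing either the error probability or the effective-secrecy quantities --- we have $C_{SK}^{Eff}(P_{X\tilde Y\tilde Z})=C_{SK}^{Eff}(P_{XYZ})$, and the right-hand side is $I(X;Y)-I(X;Z)$ by Corollary~\ref{Coro_SK_capacity}. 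Thus the whole proof hinges on the claim that the secret-key generation task is insensitive to replacing Willie's observation by a conditionally-independent but marginally-identical surrogate; this is intuitively clear (Willie can only do worse or the same with a ``re-randomized'' copy, and the legitimate parties never use $Z^n$), but writing it out with the effective-secrecy constraint \eqref{EQ_sum_of_constraints} --- in particular checking that the non-stealth term $D(P_{Z^n\bm F}\|Q_{Z^n\bm F})$ is unaffected --- is where the real work lies.
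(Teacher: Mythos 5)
Your achievability half is sound and coincides in substance with the paper's: the lower bound $I(X;\tilde{Y})-I(X;\tilde{Z})=I(X;Y)-I(X;Z)$ of Theorem~\ref{TH_UB_LB} depends on the triple only through the pairwise marginals, so it carries over to the stochastically degraded source unchanged. The converse is where you depart from the paper and where your argument has a genuine gap. The paper never reasons directly about arbitrary interactive protocols on the source $(X,\tilde{Y},\tilde{Z})$; it transfers the whole problem to the conceptual wiretap channel, verifies that the CWTC of the physically degraded source is itself physically degraded ($U-Y'-Z'$ with $Y'=(Y,U\oplus X)$, $Z'=(Z,U\oplus X)$), checks that stochastic degradedness of the source translates into identical conditional marginals $P_{Y''|U}$, $P_{Z''|U}$ for the CWTC of $(X,\tilde{Y},\tilde{Z})$, and then invokes the same-marginal property of \emph{wiretap channels} (equal conditional marginals imply equal capacity--equivocation region). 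That property is a theorem about channel coding, where it is legitimate because reliability and leakage each depend on only one conditional marginal.

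Your substitute claim --- that the operational quantity $C_{SK}^{Eff}$ depends on $P_{X\tilde{Y}\tilde{Z}}$ only through $(P_{X\tilde{Y}},P_{X\tilde{Z}})$, because Willie's observation can be replaced by a conditionally independent surrogate ``without changing either the error probability or the effective-secrecy quantities'' --- is false as stated, and this is precisely the step you defer. The error probability is indeed unaffected since Bob never sees $\tilde{Z}^n$, but the leakage term $D(P_{K\tilde{Z}^n\bm{F}}\|P_K Q_{\tilde{Z}^n\bm{F}})$ in \eqref{EQ_sum_of_constraints} is computed from the joint law of $(K,\bm{F})$ with $\tilde{Z}^n$; since $(K,\bm{F})$ is an interactive function of $(X^n,R_X,\tilde{Y}^n,R_Y)$, that joint law depends on the coupling $P_{\tilde{Z}|X\tilde{Y}}$ and not merely on $P_{\tilde{Z}|X}$. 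The source-model SK capacity is well known \emph{not} to be a function of the pairwise marginals alone --- Maurer's satellite model, which the paper itself uses as its running example, is exactly a case where changing the conditional coupling of the eavesdropper's observation (while fixing $P_{Z|X}$) changes what two-way public discussion can extract, e.g.\ via advantage distillation. Moreover, the intuition you offer (``Willie can only do worse or the same with a re-randomized copy'') points in the wrong direction for a converse: you must show that the \emph{actual} Willie, with the arbitrary coupling, is at least as harmful as the physically degraded one, i.e.\ that decoupling $\tilde{Z}^n$ from $\tilde{Y}^n$ cannot \emph{raise} the achievable key rate above $I(X;Y)-I(X;Z)$; nothing in your outline excludes a protocol that exploits the non-Markov coupling over the public channel. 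To close the argument you would either have to prove this invariance under the degradedness hypothesis (which is essentially the theorem itself) or follow the paper's route through the CWTC and the wiretap-channel same-marginal property, where the reduction is actually available.
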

\begin{proof}
We prove that the stochastically degraded source $(X,\tilde{Y},\tilde{Z})$ implies that the corresponding CWTC is also stochastically degraded.
This implies that $C_{SK}^{Eff}$ is the same as that of the CWTC from a physically degraded source of common randomness by the same marginal property of WTC.
We start from checking the CWTC of the source $(X,Y,Z)$, where the equivalently received signals at Bob and Willie are $Y'\triangleq (Y,U\oplus X)$ and $Z'\triangleq (Z,U\oplus X)$, respectively.
If $X-Y-Z$, then $U-Y'-Z'$, i.e., the CWTC is a physically degraded one, which can be proved by showing $I(U;Z'|Y')=0$ as follows:
\begin{align}
I(U;Z'|Y')&=H(Z'|Y')-H(Z'|Y',U)\notag\\
&\overset{(a)}=H(Z,U\oplus X|Y,U\oplus X)-H(Z,U\oplus X|Y,U\oplus X,U)\notag\\
&\overset{(b)}=H(U\oplus X|Y,U\oplus X)+H(Z|Y,U\oplus X)-H(Z,U\oplus X|Y,U\oplus X,U)\notag\\
&=H(Z|Y,U\oplus X)-H(Z,U\oplus X|Y,U\oplus X,U)\notag\\
&\overset{(c)}=H(Z|Y)-H(Z,U\oplus X|Y,U\oplus X,U)\notag\\
&\overset{(d)}=H(Z|Y)-H(U\oplus X|Y,U\oplus X,U)-H(Z|Y,U\oplus X,U)\notag\\
&=H(Z|Y)-H(Z|Y,U\oplus X,U)\notag\\
&\overset{(e)}=H(Z|Y)-H(Z|Y, X,U)\notag\\
&\overset{(f)}=H(Z|Y)-H(Z|Y, X)\notag\\
&=I(X;Z|Y)\notag\\
&\overset{(g)}=0,
\end{align}
where (a) is by definition of $Y'$ and $Z'$; (b) is by the chain rule of entropy; (c) is by the crypto lemma and $U$ is selected to be independent to $Y$ and $Z$; (d) is again by the chain rule of entropy; (e) is from the fact that given $U$, we can know $X$ from $U\oplus X$; (f) is by again by the selection that $U$ is selected to be independent to $X$, $Y$ and $Z$; (g) is due to $X-Y-Z$.
Due to the CWTC, we can invoke the same marginal property \cite[Lemma 2.1]{Liang_security_book}: if there exist other equivalent channel outputs $Y''\triangleq (\tilde{Y},U\oplus X)$ and $Z''\triangleq (\tilde{Z},U\oplus X)$ at Bob and Willie, respectively, and if $P_{Y''|U}=P_{Y'|U}$ and $P_{Z''|U}=P_{Z'|U}$, then the two WTCs have the same capacity-equivocation region.
Since
\begin{align}\label{EQ_Y_prime_YU}
P_{Y'}=P_{YU\oplus X}\overset{(a)}=P_{YU}\overset{(b)}=P_YP_U,
\end{align}
where (a) is due to the crypto lemma and (b) is by the selection of $U$ to be independent to the common randomness, we then have $P_{Y'|U}=P_Y$.
Similarly, we have $P_{Z'|U}=P_Z$, $P_{Y''|U}=P_{\tilde{Y}}$, and $P_{Z''|U}=P_{\tilde{Z}}$.
If $(U,Y'',Z'')$ forms a stochastically degraded WTC corresponding to the physically degraded WTC $(U,Y',Z')$, from \cite[Lemma 13.16]{Moser_adv} we have
\begin{align}
P_{Z''|U}(z|u)=\sum_{y} P_{Y''|U}(y|u)P_{Z'|Y'}(z|y)\overset{(a)}=\sum_{y} P_{Y}(y)P_{Z'|Y'}(z|y),
\end{align}
where (a) is by $P_{Y''|U}=P_{Y'|U}=P_Y$. Therefore, by $P_{Z''|U}=P_{Z'|U}$, we have
\begin{align}\label{EQ_Chain_CWTC}
P_{\tilde{Z}}(z)=\sum_{y} P_{Y}(y)P_{Z'|Y'}(z|y).
\end{align}
Now consider the stochastically degraded source of common randomness $(X, \tilde{Y}, \tilde{Z})$ fulfilling $P_{\tilde{Y}|X}=P_{Y|X}$ and $P_{\tilde{Z}|X}=P_{Z|X}$.
Similar to CWTC, we consider the following property for the stochastically degraded source $(X, \tilde{Y}, \tilde{Z})$ according to Definition \ref{Def_PHY-Sto_D}:
\begin{align}\label{EQ_intermediate_chain_source}
P_{\tilde{Z}|X}(z|x)=\sum_{y} P_{\tilde{Y}|X}(y|x)P_{Z|Y}(z|y)\overset{(a)}=\sum_{y} P_{{Y}|X}(y|x)P_{Z|Y}(z|y),
\end{align}
where (a) is by $P_{\tilde{Y}|X}=P_{Y|X}$. After marginalization over $X$ on both sides of \eqref{EQ_intermediate_chain_source}, we get
\begin{align}\label{EQ_chain_source}
P_{\tilde{Z}}(z)=\sum_{y} P_{Y}(y)P_{Z|Y}(z|y).
\end{align}
In addition, we can derive that
\begin{align}\label{EQ_chain_equivalence}
P_{Z'|Y'}&=\frac{P_{Z'Y'}}{P_{Y'}}\overset{(a)}=\frac{P_{YZX\oplus U}}{P_YP_U}\notag\\
&\overset{(b)}=\frac{P_{YZU}}{P_YP_U}\notag\\
&\overset{(c)}=\frac{P_{YZ}P_{U}}{P_YP_U}=P_{Z|Y},
\end{align}
where (a) is by definitions of $Y'$ and $Z'$ and due to \eqref{EQ_Y_prime_YU}; (b) is by the crypto lemma; (c) is by the selection of $U$. From \eqref{EQ_chain_equivalence} we know that the expressions of the stochastic degradedness of the source and CWTC, i.e., \eqref{EQ_Chain_CWTC} and \eqref{EQ_chain_source}, are the same. Then it follows that the stochastically degraded $(X,\tilde{Y},\tilde{Z})$ implies that $(U,Y'',Z'')$ is also stochastically degraded, vice versa. In addition, by the same marginal property, the WTCs formed by $(U, Y', Z')$ and $(U, Y'',Z'')$ have the same secrecy capacity, which completes the proof.
\end{proof}

From Theorem \ref{Th_SK_capacity_condition} we can have the following observation.

\begin{coro}
The lower bound in Theorem \ref{TH_UB_LB} is tight for stochastically degraded source of common randomness $(X,\,\tilde{Y},\,\tilde{Z})$.
\end{coro}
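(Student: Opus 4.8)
The plan is to chain Theorem~\ref{Th_SK_capacity_condition} together with the sandwich bounds of Theorem~\ref{TH_UB_LB}, exploiting the fact that a mutual information depends on the joint law only through the relevant marginal kernel. First I would invoke Theorem~\ref{Th_SK_capacity_condition}: since $(X,\tilde{Y},\tilde{Z})$ is stochastically degraded with $P_{\tilde{Y}|X}=P_{Y|X}$ and $P_{\tilde{Z}|X}=P_{Z|X}$ for some physically degraded triple $X-Y-Z$, we already know $C_{SK}^{Eff}=I(X;Y)-I(X;Z)$.

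Next I would re-express that right-hand side in terms of the degraded source itself. Because $I(X;\tilde{Y})$ is a function of $P_X$ and the channel $P_{\tilde{Y}|X}$ alone, and $P_{\tilde{Y}|X}=P_{Y|X}$, we obtain $I(X;\tilde{Y})=I(X;Y)$; similarly $I(X;\tilde{Z})=I(X;Z)$. Hence $C_{SK}^{Eff}=I(X;\tilde{Y})-I(X;\tilde{Z})$, i.e., the effective SK capacity of the stochastically degraded source equals the first of the two terms appearing in the lower bound of Theorem~\ref{TH_UB_LB} when that bound is evaluated at $(X,\tilde{Y},\tilde{Z})$.

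Finally I would close with a squeeze. The lower bound of Theorem~\ref{TH_UB_LB} applied to $(X,\tilde{Y},\tilde{Z})$ is $\max\{I(X;\tilde{Y})-I(X;\tilde{Z}),\,I(\tilde{Y};X)-I(\tilde{Y};\tilde{Z})\}$, which is at least its first argument $I(X;\tilde{Y})-I(X;\tilde{Z})=C_{SK}^{Eff}$; on the other hand, Theorem~\ref{TH_UB_LB} also asserts that this same maximum is at most $C_{SK}^{Eff}$. The two inequalities force equality, so the lower bound is attained.

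The argument is essentially bookkeeping once Theorem~\ref{Th_SK_capacity_condition} is in hand; the only point requiring a moment's care is the marginal-invariance step, making explicit that both $I(X;\tilde{Y})$ and $I(X;\tilde{Z})$ are unchanged in passing from the physically degraded triple to the stochastically degraded one -- equivalently, that Definition~\ref{Def_PHY-Sto_D} pins down precisely the two marginals on which the lower-bound expression depends. No further obstacle arises: no new coding argument is needed, and the second term $I(\tilde{Y};X)-I(\tilde{Y};\tilde{Z})$ never has to be evaluated, since the squeeze disposes of it automatically.
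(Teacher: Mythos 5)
Your proposal is correct and follows essentially the same route as the paper: both arguments reduce to observing that $P_{\tilde{Y}|X}=P_{Y|X}$ and $P_{\tilde{Z}|X}=P_{Z|X}$ force $I(X;\tilde{Y})-I(X;\tilde{Z})=I(X;Y)-I(X;Z)$, which Theorem \ref{Th_SK_capacity_condition} identifies with $C_{SK}^{Eff}$. Your explicit squeeze to dispose of the second argument of the $\max$ is a small tidiness the paper leaves implicit, but it is not a different method.
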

\begin{proof}
Due to same marginal property, we have $P_{\tilde{Y}|X}=P_{Y|X}$ and $P_{\tilde{Z}|X}=P_{Z|X}$, which imply $P_{\tilde{Y}}=P_{Y}$ and $P_{\tilde{Z}}=P_{Z}$. Then by definition of mutual information, we can easily see that
\begin{align}
I(X;\tilde{Y})-I(X;\tilde{Z})=I(X;Y)-I(X;Z).
\end{align}
\end{proof}
\begin{rmk}
However, the upper bound in Theorem \ref{TH_UB_LB} cannot be tight when Theorem \ref{Th_SK_capacity_condition} is valid. This is because
\begin{align}
I(X;Y'|Z')&=I(X;Y'Z')-I(X;Z')\notag\\
&=I(X;Y')-I(X;Z')+I(X;Z'|Y')\notag\\
&\overset{(a)}=I(X;Y)-I(X;Z)+I(X;Z'|Y'),\notag
\end{align}
where (a) is by the same marginal property. Note that $I(X;\tilde{Z}|\tilde{Y})$ cannot be zero since $I(X;\tilde{Z}|\tilde{Y})=0$ if and only if $X-\tilde{Y}-\tilde{Z}$ \cite[Theorem 2.5]{Polyanskiy_LNIT}. But here there is no such Markov chain $X-\tilde{Y}-\tilde{Z}$.
\end{rmk}

In the following we give an example scenario of Theorem \ref{Th_SK_capacity_condition}.

Example 1: Consider Maurer's fast fading Gaussian (satellite) model \cite{Naito_satellite} as follows:
\begin{align}
 X&=A_XS,\notag\\
 Y&=X+N_Y=A_XS+N_Y,\notag\\
 Z&=A_ZS+N_Z ,
\end{align}
  where $N_Y$ and $N_Z$ are independent AWGNs at Bob and Willie, respectively, while both are with zero mean and unit variance; $A_X$ and $A_Z$ follow CDFs $F_X$ and $F_Z$, respectively, are the i.i.d. fast fading channel gains from the source $S$ to Alice and Willie, respectively. Note that $Y$ and $Z$ have no degradedness relation in general due to the random fading. Commonly, we only consider deterministic channel gains with the order $a_X^2\geq a_Z^2$ to form the stochastic degradedness, where $a_X$ and $a_Z$ are realizations of $A_X$ and $A_Z$, respectively. However, the following result broadens the scenarios to get the degradedness among different observations of the same source.

\begin{Theo}
If the random channels $A_X$ and $A_Z$ fulfill $\bar{F}_{{A_X^2}}(x)\geq\bar{F}_{{A_Z^2}}(x)$ for all $x$, where the subscripts denote the absolute square of the channels, then $(X,Y,Z)$ is equivalent to the observations of a source $(\hat{X},\hat{Y},\hat{Z})$, which is degraded, where $\hat{X}=\hat{A}_X S$, $\hat{Y}=\hat{A}_X S+N_Y$, $\hat{Z}=\hat{A}_Z S+N_Z$, $\hat{A}_X^2=F_{A_X^2}^{-1}(U)$, $\hat{A}_Z^2=F_{A_Z^2}^{-1}(U)$, $U\sim$Unif(0,1), $U\perp\!\!\!\!\perp\{N_Y,N_Z,S\}$.
\end{Theo}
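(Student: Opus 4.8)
The plan is to build a companion source $(\hat X,\hat Y,\hat Z)$ out of the common uniform $U$ in the statement that (a) is a physically degraded source and (b) has the same conditional marginals $P_{Y\mid X}$ and $P_{Z\mid X}$ as the fading source $(X,Y,Z)$; once both hold, Definition~\ref{Def_PHY-Sto_D} says $(X,Y,Z)$ is stochastically degraded and Theorem~\ref{Th_SK_capacity_condition} gives $C_{SK}^{Eff}=I(X;Y)-I(X;Z)$. First I would translate the hypothesis: the pointwise inequality $\bar F_{A_X^2}(x)\ge\bar F_{A_Z^2}(x)$ is exactly the usual stochastic order $A_X^2\ge_{\mathrm{st}}A_Z^2$ \cite{shaked_stochastic_order}, and the textbook equivalent form of this order is the pointwise ordering of quantile functions $F_{A_X^2}^{-1}(u)\ge F_{A_Z^2}^{-1}(u)$ for all $u\in(0,1)$. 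Hence the construction in the statement — driving the two quantile functions by a single $U\sim\mathrm{Unif}(0,1)$ — is precisely the monotone (Strassen) coupling: it makes $\hat A_X^2\ge\hat A_Z^2$ almost surely, while by the probability integral transform it leaves the marginals intact, $\hat A_X^2\stackrel{d}{=}A_X^2$ and $\hat A_Z^2\stackrel{d}{=}A_Z^2$, hence $\hat A_X\stackrel{d}{=}A_X$, $\hat A_Z\stackrel{d}{=}A_Z$ for the nonnegative roots.

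For (b) I would argue that, since $S,N_Y,N_Z$ retain their laws and their mutual independence in both models and only the joint law of the fading pair has changed, the single-letter channels from Alice's observation to Bob and to Willie are unchanged: $\hat X\stackrel{d}{=}X$ and $(\hat X,\hat Y)=(\hat X,\hat X+N_Y)\stackrel{d}{=}(X,X+N_Y)=(X,Y)$ force $P_{\hat Y\mid\hat X}=P_{Y\mid X}$, and the parallel argument at Willie, using $\hat A_Z\stackrel{d}{=}A_Z$, gives $P_{\hat Z\mid\hat X}=P_{Z\mid X}$ — precisely the matching-conditional-marginal requirement of Definition~\ref{Def_PHY-Sto_D}.

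For (a), conditioning on $U=u$ fixes the fadings at the constants $\hat a_X\ge\hat a_Z\ge0$, and then $\hat Z=\hat a_Z S+N_Z=(\hat a_Z/\hat a_X)\hat X+N_Z$ exhibits Willie's observation as a scaled-down, extra-noised version of Alice's with $\hat a_Z/\hat a_X\le1$, i.e.\ the source is degraded conditioned on $U$. I would then lift this to the conceptual wiretap channel of \cite{Maurer_SKG93} along the lines of the proof of Theorem~\ref{Th_SK_capacity_condition}: forming the equivalent outputs $\hat Y'=(\hat Y,\tilde U\oplus\hat X)$ and $\hat Z'=(\hat Z,\tilde U\oplus\hat X)$ with $\tilde U$ the uniform codeword, and using the crypto lemma together with the conditional ordering, one identifies the CWTC built from $(\hat X,\hat Y,\hat Z)$ with that of a physically degraded source up to the same-marginal property; combined with (b) and Theorem~\ref{Th_SK_capacity_condition} this pins $C_{SK}^{Eff}$ at $I(\hat X;\hat Y)-I(\hat X;\hat Z)=I(X;Y)-I(X;Z)$.

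The hard part is exactly this lifting of degradedness from ``conditioned on $U$'' to the full source/CWTC level: a mixture over $U$ of physically degraded channels need not itself be physically degraded, so the per-$u$ Markov chains cannot simply be averaged, and the argument must be threaded through the same-marginal property of the wiretap channel. Most of the work goes into the conditional-independence bookkeeping — which variables remain independent after conditioning on $U$, on $\hat X$, or on the uniform codeword, and how the crypto lemma behaves in the presence of the random gains; the stochastic-order reformulation and the probability integral transform are otherwise routine.
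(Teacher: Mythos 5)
Your proposal follows essentially the same route as the paper's proof: translate the CCDF hypothesis into the usual stochastic order, invoke the Strassen/quantile coupling $\hat{A}_X^2=F_{A_X^2}^{-1}(U)$, $\hat{A}_Z^2=F_{A_Z^2}^{-1}(U)$ to obtain $\hat{A}_X^2\geq\hat{A}_Z^2$ almost surely while preserving the marginals, and then carry the conclusion through the same-marginal property of the CWTC back to Theorem~\ref{Th_SK_capacity_condition}. If anything, you are more explicit than the paper about the per-realization (conditioned on $U$) degradedness and about the delicate step of lifting it to the unconditioned source, a step the paper disposes of with a one-line appeal to ``similar to the proof steps in Theorem~\ref{Th_SK_capacity_condition}.''
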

\begin{proof}
To proceed, we first introduce the following definition and theorem.
\begin{definition}\normalfont\cite[(1.A.3)]{shaked_stochastic_order}\label{shaked_stochastic_order}
For random variables $A$ and $B$, $A\leq_{st} B$ if and only if
$\bar{F}_{A}(a)\leq \bar{F}_{B}(a)$ for all $a$.
\end{definition}
Let $A=_{st}A'$ denote that $A$ and $A'$ have the same distribution.
\begin{Theo}\label{Th_Strassen}\normalfont Coupling \cite{Thorisson_coupling}:
 $A\leq_{st} B$ if and only if there exists random variables $\hat{A}=_{st} A$ and $\hat{B}=_{st} B$ such that $\hat{A}\leq \hat{B}$ almost surely.
\end{Theo}
Therefore, from Theorem \ref{Th_Strassen} we have observations at Bob and Willie as $\hat{Y}=\hat{A}_XS+N_Y$ and $\hat{Z}=\hat{A}_ZS+N_Z$, respectively, where $\hat{A}_X^2\geq \hat{A}_Z^2$ almost surely, and $F_{A_X^2}(x)=F_{\hat{A}_X^2}(x)$ and $F_{A_Z^2}(x)=F_{\hat{A}_Z^2}(x)$, for all $x$. Similar to the proof steps in Theorem \ref{Th_SK_capacity_condition}, by the same marginal property when considering the CWTC, $(\hat{X},\hat{Y},\hat{Z})$ form equivalently stochastically degraded observations to the original ones $(X,Y,Z)$ in the sense of having the same SK capacity. Therefore, it is clear that $\bar{F}_{{A_X^2}}(x)\geq\bar{F}_{{A_Z^2}}(x)$ is a relaxed sufficient condition to guarantee that $Z$ is an equivalently stochastically degraded version of $Y$. The equivalent channels can be explicitly constructed as $\hat{A}_X^2=F_{A_X^2}^{-1}(U)$ and $\hat{A}_Z^2=F_{A_Z^2}^{-1}(U)$, is according to, e.g., the proof of \cite[Proposition 2.3]{Ross_2nd_course}.
\end{proof}

Example 2: Continuing Example 1, assume $A_X$ and $A_Z$ are from fading channels with their magnitudes following Nakagami-$m$ distribution with \textit{shape
parameters} $m_x$ and $m_z$, and \textit{spread parameters} $w_x$ and $w_z$ \cite{Simon_fading}, respectively. From Theorem \ref{Th_SK_capacity_condition} we know that $Z$ is a degraded version of $Y$ if
\begin{align}
\frac{\gamma\left(m_x,\frac{m_x }{w_x}x\right)}{\Gamma(m_x)}\geq \frac{\gamma\left(m_z,\frac{m_z }{w_z}x\right)}{\Gamma(m_z)},\,\forall x,\notag
\end{align}
where $\gamma(s,x)=\int_0^xt^{s-1}e^{-t}dt$ is the incomplete gamma function and $\Gamma(s)=\int_0^{\infty}t^{s-1}e^{-t}dt$ is the ordinary gamma function. An example satisfying the above inequality is $(m_x,w_x)=(1,3)$ and $(m_z, w_z)=(1,2)$.

%

\section{conclusion}\label{Sec_conclusion}
In this work we investigate the effect of stealthy public discussion used in the source-model of secret key generation. The results show that the SK capacity lower and upper bounds of the source-model are not affected by the additional stealth constraint. This implies that we can attain the stealthy SK capacity for free when the common randomness forms a Markov chain. We then prove that the sufficient condition to attain the SK capacity can be relaxed from physical to stochastic degradedness. We also derive a sufficient condition to attain the degradedness by the usual stochastic order for Maurer's fast fading Gaussian (satellite) model for the common randomness source.

\newpage
\renewcommand{\thesection}{Appendix I}
\section{Proof of \eqref{EQ_LB_result}}\label{APP_Th_LB}
\begin{align}
&\mathds{E}_{\mathcal{C}}[D(P_{Z^n\bm F|K}||Q_{Z^n\bm F}|P_K)]\notag\\
&\overset{(a)}=\mathds{E}_{\mathcal{C}}[D(P_{Z^n\bm F|M}||Q_{Z^n\bm F}|P_M)]\notag\\
&\overset{(b)}=\mathds{E}_{\mathcal{C},M}\left[\sum_{z^n,\bm F}P_{Z^n,\bm F|M}(z^n,\bm f|M=m)\log\left(\frac{P_{Z^n,\bm F|M}(z^n,\bm f|M=m)}{Q_{Z^n,\bm F}(z^n,\bm f)}\right)\right]\notag\\
&\overset{(c)}=\mathds{E}_{\mathcal{C},M}\left[\sum_{z^n,\bm F}\sum_{w=1}^{L_1}\frac{1}{L_1}P_{Z^n,\bm F|U^n}(z^n,\bm f|u^n(m,w))\log\left(\frac{\sum_{l=1}^{L_1}\frac{1}{L_1}P_{Z^n,\bm F|U^n}(z^n,\bm f|u^n(m,l))}{Q_{Z^n,\bm F}(z^n,\bm f)}\right)\right]\notag\\
&\overset{(d)}=\frac{1}{LL_1}\mathds{E}_{\mathcal{C}}\left[\sum_{z^n,\bm F}\sum_{m=1}^{L}\sum_{w=1}^{L_1}P_{Z^n,\bm F|U^n}(z^n,\bm f|u^n(m,w))\log\left(\frac{\sum_{l=1}^{L_1}Q_{Z^n,\bm F|U^n}(z^n,\bm f|u^n(m,l))}{L_1P_{Z^n,\bm F}(z^n,\bm f)}\right)\right]\notag\\
&\overset{(e)}=\frac{1}{LL_1}\mathds{E}_{\mathcal{C}}\left[\sum_{z^n,\bm F}\sum_{m=1}^{L}\sum_{w=1}^{L_1}P_{Z^n,\bm F|U^n}(z^n,\bm f|u^n(m,w))\log\left(\frac{\vartriangle_m(z^n,\bm f|u^n)}{\square(z^n,\bm f)}\right)\right]\notag\\
&\overset{(f)}=\frac{1}{LL_1}\sum_{u^n(1,1)}\cdots\sum_{u^n(L,L_1)}\prod_{m=1,w=1}^{L,L_1} P_U^n(u^n(m,w))\left[\sum_{z^n,\bm F}\sum_{m=1}^{L}\sum_{w=1}^{L_1}P_{Z^n,\bm F|U^n}(z^n,\bm f|u^n)\log\left(\frac{\vartriangle_m(z^n,\bm f|u^n)}{\square(z^n,\bm f)}\right)\right]\notag\\
&\overset{(g)}=\frac{1}{LL_1}\sum_{u^n(1,1)}\cdots\sum_{u^n(L,L_1)}\prod_{m=1,w=1}^{L,L_1} P_U^n(u^n(m,w))\notag\\
&\sum_{z^n,\bm F}
\left[
 \begin{array}{ccc}
     (P_{Z^n,\bm F|U^n}(\cdot|u^n(1,1))+ & \cdots  & +P_{Z^n,\bm F|U^n}(\cdot|u^n(1,L_1)))\cdot\log\left(\frac{\vartriangle_1(z^n,\bm f|u^n)}{\square(z^n,\bm f)}\right)+ \\
     \vdots & \ddots & \vdots \\
     (P_{Z^n,\bm F|U^n}(\cdot|u^n(L,1)) & \cdots  & +P_{Z^n,\bm F|U^n}(\cdot|u^n(L,L_1)))\cdot\log\left(\frac{\vartriangle_L(z^n,\bm f|u^n)}{\square(z^n,\bm f)}\right) \\
   \end{array}
 \right]\notag\\
&\overset{(h)}=\frac{1}{LL_1}\sum_{u^n(1,1)}\cdots\sum_{u^n(L,L_1)}\notag\\
&\sum_{z^n,\bm F}\left[
 \begin{array}{c}
    \Big(P_{Z^n,\bm F,U^n}(\cdot,u^n(1,1))\underset{m\neq 1,w\neq 1}{\overset{L,L_1} \prod} P_U^n(u^n(m,w))+  \cdots   +P_{Z^n,\bm F,U^n}(\cdot,u^n(1,L_1))\underset{m\neq 1,w\neq L_1}{\overset{L,L_1} \prod} P_U^n(u^n(m,w))\Big)\cdot\notag\\
    \log\left(\frac{\vartriangle_1(z^n,\bm f|u^n)}{\square(z^n,\bm f)}\right)+\cdots \\
    \Big(P_{Z^n,\bm F,U^n}(\cdot,u^n(L,1))\underset{m\neq L,w\neq 1}{\overset{L,L_1} \prod} P_U^n(u^n(m,w))+   \cdots   +P_{Z^n,\bm F,U^n}(\cdot,u^n(L,L_1))\underset{m\neq L,w\neq L_1}{\overset{L,L_1} \prod} P_U^n(u^n(m,w))\Big)\cdot\notag\\
    \log\left(\frac{\vartriangle_L(z^n,\bm f|u^n)}{\square(z^n,\bm f)}\right) \\
   \end{array}
 \right]\notag
\end{align}
\begin{align}
&\overset{(i)}=\frac{1}{LL_1}\sum_{z^n,\bm F}\left[
 \begin{array}{c}
    \underset{u^n(1,1)}\sum P_{Z^n,\bm F,U^n}(\cdot,u^n(1,1))\mathds{E}_{\underline{U}^n\setminus\underline{U}^n(1,1)}\left[\log\left(\frac{\vartriangle_1(z^n,\bm f|u^n)}{\square(z^n,\bm f)}\right)\right]+ \cdots + \\
    \underset{u^n(k,l)}\sum P_{Z^n,\bm F,U^n}(\cdot,u^n(k,l))\mathds{E}_{\underline{U}^n\setminus\underline{U}^n(k,l)}\left[\log\left(\frac{\vartriangle_k(z^n,\bm f|u^n)}{\square(z^n,\bm f)}\right)\right]+ \cdots + \\
    \underset{u^n(L,L_1)}\sum P_{Z^n,\bm F,U^n}(\cdot,u^n(L,L_1))\mathds{E}_{\underline{U}^n\setminus\underline{U}^n(L,L_1)}\left[\log\left(\frac{\vartriangle_L(z^n,\bm f|u^n)}{\square(z^n,\bm f)}\right)\right] \\
   \end{array}
 \right]\notag\\
&\overset{(j)}=\frac{1}{LL_1}\sum_{z^n,\bm F}\underset{(a,b)=(1,1)}{\overset{(L,L_1)}\sum}\underset{u^n(a,b)}\sum P_{Z^n,\bm F,U^n}(\cdot,u^n(a,b))\mathds{E}_{\underline{U}^n\setminus\underline{U}^n(a,b)}\left[\log\left(\frac{\vartriangle_a(z^n,\bm f|u^n)}{\square(z^n,\bm f)}\right)\right]\notag\\
&\overset{(k)}\leq\frac{1}{LL_1}\sum_{z^n,\bm F}\underset{(a,b)=(1,1)}{\overset{(L,L_1)}\sum}\underset{u^n(a,b)}\sum P_{Z^n,\bm F,U^n}(\cdot,u^n(a,b))\log\left(\mathds{E}_{\underline{U}^n\setminus\underline{U}^n(a,b)}\left[\left(\frac{\vartriangle_a(z^n,\bm f|u^n)}{\square(z^n,\bm f)}\right)\right]\right)\notag\\
&\overset{(l)}=\frac{1}{LL_1}\sum_{z^n,\bm F}\underset{(a,b)=(1,1)}{\overset{(L,L_1)}\sum}\underset{u^n(a,b)}\sum P_{Z^n,\bm F,U^n}(\cdot,u^n(a,b))\log\left(\frac{P_{Z^n,\bm F|U^n}(\cdot|u^n(a,b))+\underset{s\neq b}\sum \underset{u^n(a,s)}\sum P_{Z^n,\bm F,U^n}(\cdot,u^n(a,s))}{\square(z^n,\bm f)}\right)\notag\\
&\overset{(m)}\leq\frac{1}{LL_1}\sum_{z^n,\bm F}\underset{(a,b)=(1,1)}{\overset{(L,L_1)}\sum}\underset{u^n(a,b)}\sum P_{Z^n,\bm F,U^n}(\cdot,u^n(a,b))\log\left(\frac{P_{Z^n,\bm F|U^n}(\cdot|u^n(a,b))+\underset{(r,s)=(1,1)}{\overset{(L,L_1)}\sum} \underset{u^n(r,s)}\sum P_{Z^n,\bm F,U^n}(\cdot,u^n(r,s))}{\square(z^n,\bm f)}\right)\notag\\
&\overset{(n)}\leq\frac{1}{LL_1}\sum_{z^n,\bm F}\underset{(a,b)=(1,1)}{\overset{(L,L_1)}\sum}\underset{u^n(a,b)}\sum P_{Z^n,\bm F,U^n}(\cdot,u^n(a,b))\log\left(\frac{P_{Z^n,\bm F|U^n}(\cdot|u^n(a,b))}{\square(z^n,\bm f)}+ 1\right)\notag\\
&\overset{(o)}=\mathds{E}_{Z^n\bm F U^n}\left[\log\left(\frac{P_{Z^n\bm F|U^n}}{L_1Q_{Z^n\bm F}}+1\right)\right],\label{EQ_final_Exp}
\end{align}
where (a) is by constructing a CWTC, such that the key $K$ is interchangeable with the message $M$; (b) is by definition of the divergence \cite[Definition 2.2]{Polyanskiy_LNIT}; (c) is due to the fact that $P_{Z^n,F|M}$ is the marginalization of $P_{Z^n,\bm F|U^n}$ with respect to $w$, which is the index of the confusion message; in (d) we expand the expectation with respect to $M$; (e) is by defining $\sum_{l=1}^{L_1}P_{Z^n,\bm F|U^n}(z^n,\bm f|u^n(m,l))$ and $L_1Q_{Z^n,\bm F}(z^n,\bm f)$ by $\vartriangle_m(z^n,\bm f|u^n)$ and $\square(z^n,\bm f)$, respectively, to simplify the expression; (f) is by definition of the expectation over $\{U^n(m,w)\}_{m=1,w=1}^{L,L_1}$. Since $\{U^n(m,w)\}$ are generated independent and identically distributed using $P_U^n$, the joint distribution of codewords in a codebook is the product of marginal distributions; in (g) we expand the summation with respect to $m$ and $w$; in (h) we expand the product according to the form in step (g); in (i) we collect terms to form the expectation $\mathds{E}_{\underline{U}^n\setminus\underline{U}^n(k,l)}$; in (j) we collect the terms by introducing additional indices $(a,b)$; in (k) we apply Jensen's inequality to the logarithm; (l) is by expanding the expectation $\mathds{E}_{\underline{U}^n\setminus \underline{U}^n(k,l)}$; (m) is by adding the term $P_{Z^n\bm F U^n}(z^n,\bm f,u^n(a,b))$; (n) is by definition of marginalization over $P_{Z^n\bm F U^n}$ with respect to $U^n$. In particular, the 2nd term on the RHS of the numerator in (m) becomes $\mathds{E}_{U^n}[P_{Z^n,\bm F|U^n}]=Q_{Z^n\bm F}$ from \eqref{EQ_target_distr}; (o) is by definition of the expectation.\\

\bibliographystyle{IEEEtran}
\bibliography{IEEEabrv,SecrecyPs22,../SecrecyPs2,../SecrecyPs23}

\end{document}